\DeclareMathOperator{\diag}{diag}
\DeclareMathOperator*{\minimize}{minimize}
\DeclareMathOperator*{\subjectto}{subject\ to}
\DeclareMathOperator{\topin}{in}
\DeclareMathOperator{\topout}{out}
\renewcommand{\i}[1]{\mathcal{#1}}
\definecolor{forestgreen}{rgb}{0.33,0.61,0.34}
\definecolor{deepmagenta}{rgb}{0.8, 0.0, 0.8}
\definecolor{harvardcrimson}{rgb}{0.79, 0.0, 0.09}
\newcommand{\add}[1]{\textcolor{blue}{#1}}
\theoremstyle{definition}
\newtheorem{definition}{Definition}[section]
\newtheorem{asm}[definition]{Assumption}
\newtheorem{lemma}[definition]{Lemma}
\newtheorem{theorem}[definition]{Theorem}
\newtheorem{problem}[definition]{Problem}
\begin{document}
%
% paper title
% Titles are generally capitalized except for words such as a, an, and, as,
% at, but, by, for, in, nor, of, on, or, the, to and up, which are usually
% not capitalized unless they are the first or last word of the title.
% Linebreaks \\ can be used within to get better formatting as desired.
% Do not put math or special symbols in the title.
\title{Optimization of Stochastic Switching Buffer Networks via DC Programming}

\author{Chengyan~Zhao,~\IEEEmembership{Member,~IEEE},
        Kazunori~Sakurama,~\IEEEmembership{Member,~IEEE},
        and Masaki~Ogura,~\IEEEmembership{Member,~IEEE}
\thanks{C. Zhao is with the Graduate School of Science and Engineering, Ritsumeikan University, Kusatsu, Shiga, Japan. E-mail: c-zhao@fc.ritsumei,ac.jp}% <-this % stops a space
\thanks{K. Sakurama is with the Graduate School of Informatics, Kyoto University, Kyoto, Japan. E-mail: sakurama@i.kyoto-u.ac.jp}%
\thanks{M. Ogura is with the Graduate School of Information Science and Technology, Osaka University, Suita, Osaka, Japan. E-mail: m-ogura@ist.osaka-u.ac.jp}}

% The paper headers
\markboth{Journal of \LaTeX\ Class Files,~Vol.~, No.~, ~}%
{Shell \MakeLowercase{\textit{et al.}}: Bare Demo of IEEEtran.cls for IEEE journals }

\maketitle

% As a general rule, do not put math, special symbols or citations
% in the abstract or keywords.
\begin{abstract}
This letter deals with the optimization problems of stochastic switching buffer networks, where the switching law is governed by Markov process. The dynamical buffer network is introduced, and its application in modeling the car-sharing network is also presented. To address the nonconvexity for getting a solution as close-to-the-global-optimal as possible of the optimization problem, we adopt a succinct but effective nonconvex optimization method called \emph{ DC (difference of convex functions) programming}. By resorting to the log--log convexity of a class of nonlinear functions called posynomials, the optimization problems can be reduced to DC programming problems. Finally, we verify the effectiveness of our results by simulation experiments.
\end{abstract}

% Note that keywords are not normally used for peer review papers.
\begin{IEEEkeywords}
Positive linear systems, stochastic switching, Markov process, buffer network, car-sharing service, posynomial, DC programming, nonconvex optimization.
\end{IEEEkeywords}
%\chengyan{I am not clear about selling the novelty of our theory. For the other comments, I have reflected them with the blue paragraphs.}

\IEEEpeerreviewmaketitle

%\chengyan{Do we need to use “brief” instead of “paper”.} \masaki{Chengyan, let us use "brief".}

\section{Introduction}

%In this brief, we are particularly interested in studying the novel approach for the optimization of buffer network where the dynamic process is modeled by positive linear systems~\cite{Farina2000}. A dynamical system is said to be (internally) positive if its state and output are non-negative for any non-negative initial state and non-negative input. This property arises naturally in biology, network communications, economics, and probabilistic systems. 
%Studying the phenomenon of network flows provides us an practical way of modeling the efficient framework, and investigating the problems in a wide range of applications in data transportation~\cite{Gu2018}, traffic flows~\cite{Como2013,Coogan2015}, production networks~\cite{Zhao2021}, network-on-chip design~\cite{Jafari2010}. During the last decade, remarkable progress has been made in understanding the dynamic process of network flow (see~\cite{Coogan2014,Como2015} for recent surveys). 
Buffer network can well describe the dynamic process of each node and edge within a directed network, where the nodes among them behave as buffers to exchange the inflow/outflow with their neighboring nodes. The desire to conduct this research was prompted by a wide range of applications, such as water network~\cite{Michez2017}, microgrid network~\cite{Vafamand2019}, and  data transmission~\cite{Pu2018}. There is also an up-and-coming applications in the field of mobility systems, such as the 
optimal design of traffic network~\cite{Como2013,Wu2017,Grandinetti2019} and the high-efficient and effective mobility-on-demand systems~\cite{Calafiore2019,Illgen2019}. 

One of the most popular models used to analyze buffer networks is positive linear system~\cite{Farina2000}.  However, in real-life situation, neither the topology nor the parameters of the network can always remain in the time-invariant state due to the abrupt changes, such as temporal network, switched control, and time-varying parameters. To describe this phenomenon, \emph{positive Markov jump linear systems}~\cite{Bolzern2014} are proposed to efficiently develop a manipulable model for analyzing and designing, where the  law for governing the switching of each subsystem is defined by \emph{Markov process}. For stability analysis and state-feedback control of positive Markov linear systems, the relevant results can be found in~\cite{Cavalcanti2020,Zhu2017,Lian2021,Song2020} and~\cite{Li2016,Ogura2014}, respectively.

%Based on this model, many fundamental problems related to control and optimizations have been studied. For example, stability analysis has been investigated in~\cite{Lovisari2014,Coogan2015}. Similarly, the positive linear system model allows the study of control and optimization problems\add{~\cite{Ogura2019c}}.  Moreover, the robust control problem has been studied in~\cite{Rotkowitz2010,Savla2014}. 
Since the past decades, the development of theoretical research on the control of dynamical networks is mainly based on positive linear systems theory~\cite{Farina2000}. An early study for investigating the stabilization problem can be found in~\cite{Ogura2014}, where the authors proposed the measurement of mean square stability and the optimization results for state-feedback control of positive Markov jump linear systems.
The authors in~\cite{Li2016} showed that the $\mathscr{L}_{\infty}$ optimization problem of positive Markov jump linear systems via state/output-feedback can be solved by \emph{linear programming}. The optimal design of Markov switching networks has been reported in~\cite{Ogura2017}, where the optimization problems are finally reduced to \emph{geometric programming problems}. 
It should be noted that, in real-life cases, state feedback is not a tractable choice. This observation has a clear reflection on the optimal design of directed network, where it is infeasible to obtain reliable measurements of state variables. In this situation, the control problem is better formulated as directly tuning the coefficients of the system matrices. Moreover, the other reason that tuning the state matrices through the state feedback only offers the freedom of adjusting row elements of state matrices. Even if the state of network control problem can be measured, a node and its outgoing edges (column vector of the the adjacency matrix) can be designed with only one variable, which is a great limitation in solving practical network optimization problems. 
To better design the buffer network, our intuition is that all elements in the adjacency matrix should be independently adjusted to achieve maximum freedom. However, we have found that if we let all elements are independently regulated, the synthesis results based on linear programming failed is transformed to be bilinear and nonconvex, which causes great difficulties in solving the problem.

To overcome this situation, we resort to a standard but efficient nonconvex optimization problem called \emph{DC (difference of convex functions) programming}~\cite{L.T.H.2005} to develop the optimization framework for positive Markov jump linear systems, which quite often gave global
solutions and proved to be more robust and more efficient than related standard methods, especially in the large-scale problem situations. DC programming has been successfully applied in various engineering areas~\cite{LeThi2018}, including mathematics finance, signals and images, and network optimization.

This letter is structured as follows. The Markov switching buffer networks with directly adjusted matrix coefficients and its application in car-sharing network are introduced in Section~\ref{network model}. 
In Section~\ref{problem}, $\mathscr{L}_1$ and $\mathscr{L}_{\infty}$ optimization problems for Markov switching buffer networks are presented. Section~\ref{main result} is devoted to the proposed results in terms of DC program. Numerical simulations are presented in Section~\ref{simulation}.

%\subsection*{Notations}

The following notations are used in this brief. Let $\mathbb{R}$, $\mathbb{R}_+$, and~$\mathbb{R}_{++}$ denote the set of real, nonnegative, and positive numbers, respectively. The set of corresponding vectors of size $n$ are denoted by $\mathbb{R}^n$, $\mathbb{R}^n_+$, and~$\mathbb{R}^n_{++}$, respectively. We let $\mathbbm{1}$ denote a column vector with all entries set to unity. The identity and zero matrices of order $n$ are denoted by $I_n$ and~$O_n$, respectively. The real matrix $A$ is said to be non-negative (positive), and is denoted by $A \geq 0$ ($A > 0$), if all entries of~$A$ are non-negative (positive). The notion $B < A$ is defined as $B-A < 0$. % The notions $A \leq B$ and~$A < B$ are defined in the same manner. These notations also apply to the vectors. 
Let the Hadamard product of matrices $A$ and~$B$ be denoted by $A \odot B$. 
  We define the entry-wise exponential operation of a real vector~$v$ as $\exp[v]=[\exp v_1, \dotsc, \exp v_n]^\top$ and the entry-wise logarithm operation as $\log[v] = [\log v_1, \dotsc, \log v_n]^\top$. For a vector~$v$ with scalar entries $v_1, \dotsc, v_n$, we use $\diag(v_1, \dotsc, v_n)$ to denote the diagonal matrix. For a vector~$v$ with scalar entries $v_1, \dotsc, v_n$, we use $\diag(v_1, \dotsc, v_n)$ to denote the diagonal matrix. Let $x \in \mathbb{R}$, $\|x \|_1=\sum_{i=1}^n|x_i|$ stands for the 1-norm of a vector, whereas the vector $\infty$-norm is defined by $\|x \|_{\infty}=\max_{i \in {1,\dotsc,n}}|x_i|$. Given $v:[0, \infty) \rightarrow \mathbb{R}^n$, the $\mathscr{L}_1$-norm of a function $v(t)$ is denoted by $\|v \|_{\mathscr{L}_1}=\int_0^{\infty}\|v(t) \|_1$, and the $\mathscr{L}_{\infty}$-norm is defined by $\|v \|_{\mathscr{L}_{\infty}}=\textup{ess} \sup_{t \geq 0}\|v(t) \|_{\infty}$. $E[\cdot]$ is the mathematical expectation operator.
%Let $A\otimes B$ denote the Kronecker product of matrices $A$ and~$B$. If $A$ and~$B$ are square, the Kronecker sum of~$A$ and~$B$ is defined as $A \oplus B=A \otimes I_m +I_n\otimes B$, where $n$ and~$m$ denote the orders of~$A$ and~$B$, respectively.

\section{Markov Switching buffer networks}\label{network model}
In this section, we first give the description of Markov switching buffer networks in Subsection~\ref{subsec:model}. We then present a real-life example in Subsection~\ref{subsec:car}.

\subsection{Model description}\label{subsec:model}

Consider a weighted, directed buffer network (for example, e.g.,~\cite{Rantzer2018}) defined by the graph 
% $$
$\i G=(\i V, \i E, \i W),$ where $\i V= \{{v_1, \dotsc, v_n}\}$
% $$ 
denotes the set of~$n$ nodes within the network and 
$
\i E = \{e_1, \dotsc, e_m\} \subseteq \i V \times \i V$
is the set of directed edges. Because the graph $\i G$ is weighted, a positive and fixed weight $w_{e_\ell}$ is assigned to an edge $e_\ell$. The scalar $w_{ij}$ is defined as the weight of the edge $(i, j)$. Thus, the adjacency matrix $A_{\i G} \in \mathbb{R}^{n \times n}$ of the graph~$\mathcal G$ is given by, 
\begin{equation*}
[A_{\i G}]_{ij}=
\begin{cases}
w_{ji},&\mbox{if $(j, i)\in \i E$},\\
0,& \mbox{otherwise}. 
\end{cases}
\end{equation*}
The set of in-neighborhoods of node $i$ is defined by  $\i N_{{i}}^{\topin}=\{{j} \in \i V: {(j, i)}\in \i E\}$. Similarly, the set of out-neighborhoods is defined by $\i N_{{i}}^{\topout}=\{{j} \in \i V: {(i, j)}\in \i E\}$. 

In this paper, we place the following assumption on the structure of the network. Suppose that there exist two special sets of nodes that serve as origins (i.e., the nodes having an empty in-neighborhood node) and destinations (i.e., the nodes having an empty out-neighborhood node). We let $\i V_{o}=\{ 1, \dotsc,  |\i V_{o}| \}$  and~$\i V_{ d}$ denote the set of origins and destinations of the buffer network, respectively. We then consider the dynamic process of the buffer network expressed by the following differential equations,
\begin{equation}\label{eq:diff}
\frac{dx_i}{dt}=\left \{
\begin{aligned}
&f_{i}^{\topin}-\sum_{j \in \i N_{{i}}^{\topout}}u_{ij}, \quad\quad\quad \  \ \text{if} \quad  i\in \i V_{o},\\
&\sum_{j \in \i N_{{i}}^{\topin}}u_{ji}-\sum_{j \in \i N_{{i}}^{\topout}}u_{ij}, \quad \text{if} \quad  i\notin \i V_{o} \cup \i V_{ d},\\
&\sum_{j \in \i N_{{i}}^{\topin}}u_{ji}-f_{i}^{\topout}, \quad\quad\quad \  \text{if} \quad  i\in \i V_{ d},
\end{aligned}
\right.
\end{equation}
where {$x_i$~($i=\{1, \dotsc, n\}$)} represents buffer variable in node~$i$, $u_{ij}$ is the volume of flow from node $i$ to $j$, and~$f_{i}^{\text{in}}$ and~$f_{i}^{\text{out}}$ denote the inlet and outlet effects, respectively. 

In this brief, the flows among the buffer network are assumed to obey the following linear form:
\begin{equation}\label{eq:flow}
f_{i}^{\text{out}}=\beta_ix_i, \ \ u_{ij}=\delta_{ij}w_{ij}x_i,
\end{equation}
where $\beta=\{\beta_i\}_{i \in \i V_{ d}}$  and  $\delta=\{\delta_{ij}\}_{(i, j)\in \i E}$ are the parameters to be tuned in the next section. Herein, if we adopt the results of state/output-feedback in~\cite{Li2016}, the freedom of tuning the flow $u_{ij}=\phi_{i}w_{ij}x_i$ only derives from the parameter $\phi_{i}$. In this letter, we allow $u_{ij}$ to be designed independently on the parameter of each edge as shown in~\eqref{eq:flow}.

To measure the performance of the buffer network, we adopt the 
% following output form  
% \begin{equation*}
%     y = \begin{bmatrix}
%     x \\ \alpha u
%     \end{bmatrix}, 
% \end{equation*}
output 
$y = [x^\top \ \alpha u^\top]^\top$, 
where $\alpha > 0$ is a weight constant and~$u \in \mathbb{R}_+^{n \times n}$ includes the information of the edges. 
If we define the matrix $B$ and $D$ by 
\begin{equation*}
B_{ij}=
	\begin{cases}
	\beta_{i},&\mbox{if $i = j$}, \\
	0,&\mbox{otherwise},
	\end{cases}\quad
	D_{ij}=
	\begin{cases}
	\delta_{ji},&\mbox{if $(i, j)\in \i E$}, \\
	0,&\mbox{otherwise},
	\end{cases}
\end{equation*}
then the dynamic model can then be expressed as 
\begin{equation*}%\label{eq:sys}
\Sigma: \left \{
\begin{aligned}
&\dot{x}=\Bigl(D \odot  A_{\i G} -\diag\bigl(\mathbbm{1}^\top (D \odot A_{\i G})\bigr)-B \Bigr)x+G^{\topin} f^{\topin},\\
%&\hspace{3cm}+G^{\topin} f^{\topin}, \\
&y=G^{\topout}(\delta)x,
\end{aligned}
\right.
\end{equation*}
where input vector $f^{\topin}$,  input matrix $G^{\topin}$, and output matrix $G^{\topout}(\delta)$ are defined by  $f^{\topin}=[f_1^{\topin} \cdots f_{|\i V_{o}|}^{\topin}]^\top$ and 
\begin{equation} \label{eq:B}
 G^{\topin}=
\begin{bmatrix}
I_{|\i V_{o}|}\\
O_{n-|\i V_{o}|, |\i V_{o}|}
\end{bmatrix},~
G^{\topout}(\delta)= 
\begin{bmatrix}
I_n\\
\alpha H(\delta)
\end{bmatrix}.
\end{equation}
The matrix $H(\delta)$ is defined by 
$    H(\delta)_{\ell i} =   w_{e_\ell}$ if $i = e_\ell(1)$ and $ H(\delta)_{\ell i} = 0$ otherwise.
% \begin{equation*}%\label{eq:C}
%     H(\delta)_{\ell i} = 
%     \begin{cases}
%         \delta_{e_\ell} w_{e_\ell},& \mbox{if $i = e_\ell(1)$}, 
%         \\
%         0,&\mbox{otherwise.}
%         \end{cases}
% \end{equation*}
For each edge $e_\ell$, we use the notation $e_\ell = (e_\ell(1), e_\ell(2))$, wherein the nodes $e_\ell(1)$ and~$e_\ell(2)$ denote the origin and destination of the edge, respectively. Since $G^{\topin}$ and $G^{\topout}(\delta)$ are nonnegative matrices and $D \odot  A_{\i G} -\diag\bigl(\mathbbm{1}^\top (D \odot A_{\i G})\bigr)-B$ is the Metzler matrix, following the definition in~\cite{Farina2000}, the dynamic model~\eqref{eq:B} is the positive linear system. Furthermore, if the adjacent matrix $A_{\i G}$, input matrix $G^{\topin}$, and output matrix $G^{\topout}(\delta)$ in~\eqref{eq:B} varies under Markov process and $A=D \odot  A_{\i G} -\diag\bigl(\mathbbm{1}^\top (D \odot A_{\i G})\bigr)-B$, system $\Sigma$ is upgraded to Markov switching buffer networks:
\begin{equation}\label{eq:sys}
\Sigma_\sigma: \left\{ 
\begin{aligned}
&\dot{x}=A_{\sigma(t)}(\beta ,\delta)x+G^{\topin}_{\sigma(t)} f^{\topin},  \\
&y=G^{\topout}_{\sigma(t)}(\delta)x,
\end{aligned}
\right.
\end{equation}
 where $\sigma=\{\sigma(t)\}_{t \geq 0}$ is a time-homogeneous Markov process taking values in the finite discrete set $S=\{1, \dotsc, N\}$. 
 We assume that $\Sigma_\sigma$ is positive if the subsystems ($A_i(\beta ,\delta)$, $G^{\topin}_i$, and $G^{\topout}_i(\delta)$) are positive for all $i \in S$. The probability rate matrix of system $\Sigma_\sigma$ is given by
 \begin{equation*}
\textup{Pr}\{\sigma(t+h)=j \mid \sigma(t)=i\}=
\left\{ 
\begin{aligned}
&\pi_{ij}h+o(h),  \quad~~~  \mbox{if } j\neq i,\\
&1+\pi_{ii}h+o(h),~ \mbox{if } j = i,
\end{aligned}
\right.
\end{equation*}
where $\pi_{ij} > 0$, the parameterized transition rate from mode $i$ to mode $j$ obeys the equations 
\begin{equation}\label{prob:}
   \pi_{ii}+\sum_{j=1,i\neq j}^N\pi_{ij}=0 
\end{equation}
and $o(h)$ is little-$o$ notation defined by $\lim_{h \to 0}o(h)/h=0$.

% \add{Clearly, model~\eqref{eq:diff} can be used in place of optimizing traffic network~\cite{Illgen2019}, where $\delta_{ij}$ is the proxy of the capacity from $i$ to $j$. As an extended model from~\cite{Ogura2019c}, the introduced $\delta_{ij}$ allows us to design each edge of the network independently.}

\subsection{Example: car-sharing service network}\label{subsec:car}

%\color{blue}
In the one-way car-sharing service network, wherein the stations providing parking slots for customers renting/returning vehicles at any stations. Despite the convenience, this service has the shortcoming of uneven distribution of vehicles as the service proceeds, which causes parking slots or vehicles to be unavailable at particular stations. To reduce the uneven distribution, dynamics pricing is promising, which controls
the demand of customers by adjusting usage prices in real-time. Here, we discuss how we can determine the prices for efficient control.

First, we construct a mathematical model of the system of the one-way car-sharing service according to the authors' previous paper~\cite{Ikeda2021}. Let $n$ be the number of stations of the service network which corresponds to a node in a weighted and directed graph $\mathcal{N}$. Let $x_i$ ($i \in \mathcal{N}$) be the expectation of the number of vehicles
parking at station $i$.
The possible usage between stations is described by an edge set $\mathcal{E}$.
Let $u_{ij}$ ($(i,j) \in \mathcal{E}$) be 
the expectation of the number of customers who travel from station $i$ to $j$
within a time interval.
Let $f_i^{\mathrm{in}}$ 
($f_i^{\mathrm{out}}$)
be the expectation of the number of vehicles moving 
to (from) this area from (to) other areas.
Then, this system is modeled as equation~\eqref{eq:diff}.

Next, we construct a model of the demand of customers which can change with prices.
Assume that the expectation of the demand is $\bar u_{ij}(t)$
when the price is $\bar p_{ij}(t)$
and that the change of the demand is governed with an affine model
around this point.
Let $p_{ij}$ be the price
for traveling from stations $i$ to $j$,
and let $\delta_{ij}$ be the price elasticity.
Then, the affine model is given as
\begin{equation}
 u_{ij} = \bar u_{ij} - \delta_{ij} (p_{ij} - \bar p_{ij})
%  \tag{A}
 \label{demand_model}
.
\end{equation}
As a pricing strategy, the price $p_{ij}$ is adjusted
according to the number $x_i$ of vehicles at station $i$
as follows:
\begin{equation}
 p_{ij} = \hat p_{ij} - w_{ij} x_i
%  \tag{B}
 \label{price_model}
,
\end{equation}
where $\hat p_{ij}$ and $w_{ij}$ are design parameters.
We set $\hat p_{ij} = \bar p_{ij} + \bar u_{ij}/\delta_{ij}$,
and from (\ref{demand_model}) and (\ref{price_model}),
the demand model is reduced to
 $u_{ij} = \bar u_{ij} - \delta_{ij} (p_{ij} - \bar p_{ij})
        %   \\
           = \bar u_{ij} - \delta_{ij} (\hat p_{ij} - w_{ij} x_i - \bar p_{ij})
        %   \\
           = \delta_{ij} w_{ij} x_i$. 
This corresponds to the equation~\eqref{eq:flow}.

In practical service network~\cite{Bai2014}, to effectively deal with uncertainties for obtaining low expected operational costs, stochastic service network model is proposed to suit all possible future scenarios arise in practice. Thus, model~\eqref{eq:sys} is concise and feasible in describing the stochastic switching networks. With these preparations, it is feasible to design parameters $w_{ij}$ by applying the proposed method in the following section.

%\color{black}

%$x_i$ stands for the number of people on node $i$, and~$\beta_{i}$ shows the effect of outflow at the destination

%The extended model has a clear reflection in the bike-sharing network in~\cite{Lovisari2014}, where the amount of the allocated bikes on each station (node) can be calculated by~\eqref{eq:diff} and~\eqref{eq:flow}. The total relaxation from \eqref{eq:flow} allows the calculation of the allocated bikes on the station from the flows of its independent inlet/outlet roads. With this clear illustration, it is easier to show network optimization problems.
%\chengyan{Based on Prof. Sakurama's comments, I added the example of bike-sharing example to explain the model.}

\section{Problem formulation}\label{problem}

Following the formulation in the previous section, we assume that the decay rates of node $i$ and directed edge $ij$ can be tuned by the parameters $\beta_i$ and~$\delta_{ij}$ to improve the performance of the buffer network. Calculation of the sum of all variables yields the cost function
\begin{equation}
    L(\beta, \delta)=\sum_{i\in \i V_{\mathnormal{d}}}g_i(\beta_i)+\sum_{(i, j)\in \mathcal E} h_{ij}(\delta_{ij}),
\end{equation}
where the variables are tuned within the following intervals 
\begin{equation}\label{eq:interval}
0<\beta_i \leq \bar{\beta}_i,\ 0<\delta_{ij} \leq \bar{\delta}_{ij}.
\end{equation}
In this letter, we adopt $\mathscr{L}_1$ and $\mathscr{L}_{\infty}$ norms as the requirements of the buffer network.
For an exponential mean stable system $\Sigma_\sigma$, and the initial condition $\sigma(0)$ and $w \in \mathscr{L}_1$, if there exsits $\gamma >0$ such that $\|E[y]\|_{\mathscr{L}_1} \leq \gamma\|u\|_{\mathscr{L}_1}$. The $\mathscr{L}_1$-gain of system $\Sigma_\sigma$, denoted by $\|\Sigma_\sigma\|_1$, is defined by
\begin{equation*}
    \|\Sigma_\sigma\|_1=\sup_{u \in \mathscr{L}_1}\frac{\|E[y]\|_{\mathscr{L}_1}}{\|u\|_{\mathscr{L}_1}}.
\end{equation*}
 Likewise, if there exists $\gamma >0$ such that $\|E[y]\|_{\mathscr{L}_{\infty}} \leq \gamma\|u\|_{\mathscr{L}_{\infty}}$. The $\mathscr{L}_{\infty}$-gain of system $\Sigma_\sigma$, denoted by $\|\Sigma_\sigma\|_{\infty}$, is defined by
\begin{equation*}
     \|\Sigma_\sigma\|_{\infty}= \sup_{u \in \mathscr{L}_{\infty}}\frac{\|E[y]\|_{\mathscr{L}_{\infty}}}{\|u\|_{\mathscr{L}_{\infty}}}.
\end{equation*}

 %if system~$\Sigma$ is stable, then the $H^{\infty}$ norm of the system is defined as $\|\Sigma\|_{\infty}=\sup_{w\in \i L^{\text{2}}}\|\Omega*w\|_2/\|w\|_2,$
%where \add{$\Omega \neq 0$} and $*$ denotes a convolution product and 
%\newcommand{\relmiddle}[1]{\mathrel{}\middle#1\mathrel{}}
% $$\i L^{\text{2}}=\left\{ \mathnormal f:[\text 0,\infty) \rightarrow\mathbb{R}^{\mathnormal{n}}\relmiddle|\int_{\text 0}^{\infty}\mathnormal \|f(t)\|^\text 2 dt<\infty \right\}$$
%$\i L^{\text{2}}=\left\{ \mathnormal f:[\text 0,\infty) \rightarrow\mathbb{R}^{\mathnormal{n}}\relmiddle|\int_{\text 0}^{\infty}\mathnormal \|f(t)\|^\text 2 dt<\infty \right\}$
%denotes the space of Lebesgue measurement functions.

We are now ready to state the optimization problems studied in this letter.

\begin{problem}\label{eq:prob_H2}($\mathscr{L}_1$/$\mathscr{L}_{\infty}$-norm optimization): 
Given the desired parameter tuning cost $\bar{L}(\beta, \delta)$, find the parameters $\beta$ and $\delta$ minimizing the $\mathscr{L}_1$/$\mathscr{L}_{\infty}$ norm, under the constraint that the parameter constraints~\eqref{eq:interval} are satisfied.
%$\Sigma$ is stable, and the system requirement $\norm{\Sigma}_2 < \gamma_2$ is satisfied
%\add{
%\begin{align*}
%	\minimize_{\beta, \delta \in \mathbb{R}_{++}}\ & L(\beta, \delta)
	\\
%\subjectto\ & \|\Sigma\|_2 \leq \gamma_2, \eqref{eq:interval}.
%\end{align*}}
\end{problem}
\begin{problem}\label{eq:prob_H_infty}($\mathscr{L}_1$/$\mathscr{L}_{\infty}$-norm constrained optimization): Given the desired $\mathscr{L}_1$/$\mathscr{L}_{\infty}$ norm, find the parameters $\beta$ and $\delta$ minimizing the parameter tuning cost $\bar{L}(\beta, \delta)$, under the constraint that the parameter constraints~\eqref{eq:interval} are satisfied.
%\add{
%\begin{align*}
%	\minimize_{\beta, \delta \in \mathbb{R}_{++}}\ & L(\beta, \delta)
	\\
%\subjectto\ & \|\Sigma\|_\infty \leq \gamma_\infty, \eqref{eq:interval}.
%\end{align*}}
%
\end{problem}

The difficulty of solving Problem~\ref{eq:prob_H2} and Problem~\ref{eq:prob_H_infty} mainly stems from the noncovnexity through independently tuning the edges weight $\delta_{ij}$ of adjacency matrices.
% In the application of the extended model~\eqref{eq:B} on the traffic problem, we aim at to find the optimal transportation resources allocation on each edge. In this brief, $\delta_{ij}$ stands for the road capacity, and $h_{ij}(\delta_{ij})$ represents the investment for improving the capacity. The formulated problems can also be applied to the river network~\cite{Michez2017}, for maximumly reducing the flood damage of the river network, the riparian buffers should be optimally designed along each branch of the river. To solve these large-scale problems, a framework that can finding a global solution of a nonconvex program is very essential.  

% We remark that although a simplifying assumption in~\cite{Ogura2019c} reduces the optimization problems to convex optimization problems, the nonconvex optimization problems in this brief do not necessarily allow this reduction. To overcome the nonconvexity, we resort to a more general class of optimization problems referred to as DC program~\cite{Horst1999,LeThi2018}.

\section{Main results}\label{main result}

In this section, we present the solutions to Problem~\ref{eq:prob_H2} and Problem~\ref{eq:prob_H_infty} in terms of DC programming~\cite{Horst1999,LeThi2018}.
We begin this section by introducing the preliminary knowledge of posynomials, DC functions, and DC program to derive the main results. 

\begin{definition}\label{def:posy}
Let $v_1$, $\dotsc$, and~$v_n$ denote $n$ real positive variables. We state that a real function $g(v)$  is a {\it monomial} if $c>0$ and~$a_1, \dotsc, a_n \in \mathbb{R}$ such that $g(v) = c v_{\mathstrut 1}^{a_{1}} \dotsm v_{\mathstrut n}^{a_n}$. We state that a real function $f(v)$ is a {\it posynomial}~\cite{Boyd2007} if $f$ is the sum of the monomials of~$v$.
\end{definition}

The following lemma shows the log-convexity of posynomials~\cite{Boyd2007}.
\begin{lemma}\label{log:convexity}
	If $f \colon \mathbb{R}_{++}^{\mathnormal n} \to \mathbb{R}_{++}$ is a posynomial function,  then the log-transformed function
% 	\begin{equation*}%\label{log:trans}
% 	F \colon \mathbb{R}^{\mathnormal n} \to \mathbb{R} \colon \mathnormal w \mapsto \log[f(\exp[w])]
% 	\end{equation*}
	$$F \colon \mathbb{R}^{\mathnormal n} \to \mathbb{R} \colon \mathnormal w \mapsto \log[f(\exp[w])]$$
	is convex. 
\end{lemma}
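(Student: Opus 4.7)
The plan is to reduce the statement to the well-known convexity of the log-sum-exp function applied to affine maps.

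First I would write the posynomial in its canonical form $f(v) = \sum_{k=1}^K c_k v_1^{a_{k,1}} \dotsm v_n^{a_{k,n}}$ with each $c_k > 0$ and exponents $a_{k,i} \in \mathbb{R}$. Substituting $v = \exp[w]$ and using $(e^{w_i})^{a_{k,i}} = \exp(a_{k,i} w_i)$, each summand becomes $c_k \exp(a_k^\top w) = \exp(a_k^\top w + b_k)$ with $a_k = (a_{k,1},\dotsc,a_{k,n})^\top$ and $b_k = \log c_k$. This rewrites $F$ as a log-sum-exp of $K$ affine functions of $w$, namely $F(w) = \log \sum_{k=1}^K \exp(a_k^\top w + b_k)$.

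Next I would establish convexity of the map $w \mapsto \log \sum_k \exp(\ell_k(w))$ for any finite family of affine functionals $\ell_k$. The cleanest route is H\"older's inequality: for $\theta \in (0,1)$ and arbitrary $w,w' \in \mathbb{R}^n$, affinity gives $\ell_k(\theta w + (1-\theta) w') = \theta \ell_k(w) + (1-\theta)\ell_k(w')$, so each term equals $\exp(\ell_k(w))^{\theta}\exp(\ell_k(w'))^{1-\theta}$. Summing over $k$ and applying H\"older with conjugate exponents $1/\theta$ and $1/(1-\theta)$ yields
\begin{equation*}
\sum_k \exp(\ell_k(\theta w + (1-\theta)w')) \leq \Bigl(\sum_k \exp(\ell_k(w))\Bigr)^{\theta}\Bigl(\sum_k \exp(\ell_k(w'))\Bigr)^{1-\theta}.
\end{equation*}
Taking logarithms of both sides recovers the convexity inequality $F(\theta w + (1-\theta)w') \leq \theta F(w) + (1-\theta) F(w')$, completing the argument.

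An alternative I would keep in reserve is to verify positive semidefiniteness of the Hessian of $F$ directly: writing $z_k = \exp(a_k^\top w + b_k) > 0$ and $Z = \sum_k z_k$, a short computation gives $\nabla^2 F(w) = Z^{-2}\bigl(Z \sum_k z_k a_k a_k^\top - (\sum_k z_k a_k)(\sum_k z_k a_k)^\top\bigr)$, which is positive semidefinite by the Cauchy--Schwarz inequality applied in the weighted inner product induced by the $z_k$. The only real obstacle here is notational bookkeeping in moving between the multivariate monomial form and the affine-in-$w$ form after the substitution $v = \exp[w]$; once that identification is made, either H\"older or the Hessian calculation finishes the proof with no further difficulty.
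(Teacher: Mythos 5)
Your proof is correct and complete. Note that the paper itself offers no proof of this lemma: it states the result as known and simply cites the Boyd et al.\ tutorial on geometric programming, so there is no in-paper argument to compare against. Your derivation is exactly the standard one behind that citation: the substitution $v=\exp[w]$ turns each monomial $c_k v_1^{a_{k,1}}\dotsm v_n^{a_{k,n}}$ into $\exp(a_k^\top w+b_k)$ with $a_k=(a_{k,1},\dotsc,a_{k,n})^\top$ and $b_k=\log c_k$, so that $F$ becomes a log-sum-exp of affine functions, whose convexity you establish via H\"older's inequality with conjugate exponents $1/\theta$ and $1/(1-\theta)$. Both of your closing routes are sound --- the H\"older computation is airtight as written, and the reserve Hessian argument, with $u^\top\nabla^2F(w)\,u = Z^{-2}\bigl(Z\sum_k z_k(a_k^\top u)^2-(\sum_k z_k\,a_k^\top u)^2\bigr)\geq 0$ by Cauchy--Schwarz in the $z_k$-weighted inner product, is likewise correct --- with the H\"older route being the more elementary and self-contained of the two, since it avoids any smoothness bookkeeping.
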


The property shown in Lemma~\ref{log:convexity} enables us to build a relationship with a general class of mathematical programming that deals with the difference between two convex functions, called \emph{DC programming}. 

\begin{definition}(DC functions~\cite{Horst1999})
 Let $\mathcal{C}$ be a convex subset of~$\mathbb{R}^{\mathnormal{n}}$. A real-valued functions $f:\mathcal{C} \rightarrow \mathbb{R}$ is called a {\it DC function} on $\mathcal{C}$ if there exist two convex functions $g,h: \mathcal{C} \rightarrow \mathbb{R}$ such that $f$ can be expressed in the form
 \begin{equation*}
     f(x)=g(x)-h(x).
 \end{equation*}
\end{definition}

In principle, every continuous function can be approximated by a DC function with the desired precision. Based on decomposition methods~\cite{Horst1999}, it is possible to convert a nonconvex optimization problem to a DC programming problem.
\begin{definition}(DC programming problem~\cite{Horst1999})\label{DC programming}
Programming problems dealing with DC functions are called {\it DC programming problems}. Let $\mathcal{C}$ be a closed convex subset of~$\mathbb{R}^{ n}$, and the general form of DC programming problem considered in this brief is 
\begin{align*}
	\minimize_{x\in \mathcal{C}}\ &f_0(x)
	\\
\subjectto\ & f_i(x)\leq 0,\,i=1,\dots,m,
\end{align*}
	where $f_0(x)=g_0(x)-h_0(x)$ and~$f_i(x)=g_i(x)-h_i(x)$ are the differences of the two convex functions.
\end{definition}
 To optimally solve the DC programming problem, we can choose the branch-and-bound type and outer-approximation algorithms~\cite{Oliveira2018}, which lead to more efficient procedures. Subject to the proper assumption for the cost function, we show that Problems~\ref{eq:prob_H2} and~\ref{eq:prob_H_infty} can be transformed to DC programming problems.
%We present solutions to Problems~\ref{eq:prob_H2} and~\ref{eq:prob_H_infty}.
% \textbf{Proposition 1}:
% 		If $f_0(x), g_1(x), \dotsc, g_m(x), h_1(x), \dotsc,$ $h_m(x)$ are posynomial functions, and variables $x \in \Theta$ satisfy Definition 1, we say the following optimization problem 
% \begin{equation}
% 	\begin{aligned}
% 	&\minimize_{x \in \Theta} ~~~ f_0(x) \\
% 	&	\subjectto  ~~ g_i(x)-h_i(x) \leq 0,~i =1, \dotsc, m, \\ 
% 	\end{aligned}
% \end{equation}
% can be transformed into a DC programming optimization problem through the logarithmic variable transformation
% \begin{equation*}
% x=\exp[z],~z \in \Gamma \subset \mathbb{R}^{ m}.
% \end{equation*}
% Then, by using the log-log convex property of posynomial fucntion, we obtain the DC programming optimization problem with the following form:
% \begin{subequations}
% 	\begin{align*}
% 	&\minimize_{z \in \Gamma} ~~~ \log f_0(\exp[z]) \\
% 	&	\subjectto  ~~ \log g_i(\exp[z])-\log h_i(\exp[z]) \leq 0,\\
% 	& \quad\quad\quad\quad\quad\quad \ i =1, \dotsc, m.\\
% 	\end{align*}
% \end{subequations}
%In the traffic control problem, $g_i(\beta_i)$ stands for the cost of tuning the outlet of intersection $i$, and $h_{ij}(\delta_{ij})$ represents the cost on tuning the volume from $i$ to $j$.

\begin{asm}\label{costfunc}
The following assumptions are made:
\begin{enumerate}%[(i)]
\item The set of system matrices $A_i(\beta, \delta)$
\begin{equation*}
  A_i(\beta, \delta)=A^{o}_i(\delta)+A^{d}_i(\beta, \delta), i \in \{1, \dotsc, N\},
\end{equation*}
where $A^{o}_i(\delta)$ and $A^{d}_i(\beta, \delta)$ are matrices with posynomial or zero entries.
%\item \add{$\pi_{ij (i\neq j)}(\theta)$ is a posynomial, and $\pi_{ii}(\theta)$ is expressed by taking the negative sign of $\pi_{ij (i\neq j)}(\theta)$ and summing them up.}
%\add{There exists a diagonal matrix
%\begin{equation*}
%    \overline \Pi=\diag(\bar \pi_{1}, \dotsc, \bar \pi_{N})
%\end{equation*}
%such that each entry of the matrix
%\begin{equation}\label{asm:prob_trans}
%  \widetilde{\Pi}(\theta)= \Pi(\theta)+ \overline \Pi
%\end{equation}
%is a posynomial in $\theta$ or zero. }

\item The functions $g_i(\beta_i)$ and~$h_{ij}(\delta_{ij})$ are posynomials for all $i$ and $j$. 

%\item \label{asm:posyconstraint} There exist posynomials  $\beta_1(\theta)$, \dots, $\beta_m(\theta)$ such that the parameter constraint satisfies 
%\begin{equation}\label{eq:interval}
%\Theta = \{\theta\in\mathbb{R}^\ell\colon \beta_1(\theta)\leq 1, \dotsc, \beta_m(\theta)\leq 1 \}. 
%\end{equation}
\end{enumerate}
\end{asm}

%\masaki{Chengyan, specify what $i$ is}
%To present the results concisely, we let $G^{\topin}_j$ and~$G^{\topout}_i(\delta)$
 %denote the $j$th column and~$i$th row of the matrices $G^{\topin}$ and~$G^{\topout}(\delta)$, respectively, and define the $n^2$-dimensional column and row vectors
% \begin{equation}\label{Kron:G}
% \begin{aligned}
%$\widetilde{G}^{\topin} =\sum_{j=1}^{\abs{\i V_o}} G^{\topin}_j \otimes G^{\topin}_j$ and 
%$\widetilde{G}^{\topout}(\delta)=\sum_{i=1}^{n+m} G^{\topout}_i(\delta) \otimes G^{\topout}_i(\delta)$.

\begin{theorem}\label{theorem_H2}
Under Assumption~\ref{costfunc}, if we make $\mathscr{L}_1$ norm of system $\Sigma_\sigma$ to be minimized, the solution of Problem~\ref{eq:prob_H2} is given by the solution of the following DC programming problem
\begin{small}
\begin{align*}
\minimize_{\mathclap{\gamma > 0, v_i \in \mathbb {R}^{n}_+,
\atop
\phi, \eta \in \mathbb{R}}} ~ &\gamma
%\minimize_{\mathclap{\gamma, \eta \in \mathbb{R}
%\atop
%\mu \in \mathbb{R}_{++}^{n^2}}}\ \ & \log [L(\exp[\gamma], \exp[\eta])]
\\ 
\subjectto\ \ 
&\log[v_i^\top A^o_i(\exp[\eta]) +\Sigma_{i\neq j}^N\pi_{ij}v_j^\top+\mathbbm{1}_r^\top G_i^{\topout}(\exp[\eta])^\top]  \notag  \\  
& - \log [v_i^\top A^{d}_i(\exp[\eta],\exp[\phi]) -\pi_{ii}v_i^\top] \leq 0,   \\
&	\log[v_i^\top G_i^{\topin}]  -\log[\gamma\mathbbm{1}_s^\top] \leq 0,  \\
	    &\log[L(\exp{[\eta]}, \exp[\phi])] - \log[\bar L] \leq 0,\\
&\log[ \exp[\phi]]- \log[\exp[\bar{\phi}]]\leq 0,\\ 
&\log[\exp[\eta]]- \log [\exp[\bar{\eta}]]\leq 0.
\end{align*}
The solution of Problem~\ref{eq:prob_H2} is then given by
\begin{equation}\label{solution}
B=\exp[\phi],\ D=\exp[\eta].
\end{equation}
\end{small}
\end{theorem}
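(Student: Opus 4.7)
The plan is to first characterize the $\mathscr{L}_1$-gain of $\Sigma_\sigma$ by a standard dual inequality system in a collection of positive co-state vectors, and then to convert those inequalities into DC constraints by invoking the log--log convexity of posynomials provided by Lemma~\ref{log:convexity}.

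For the dual characterization, I would appeal to the standard result (see, e.g.,~\cite{Li2016}) that $\|\Sigma_\sigma\|_1 < \gamma$ is equivalent to the existence of positive vectors $v_i \in \mathbb{R}_{++}^n$, $i \in S$, satisfying, for every $i$,
\begin{equation*}
v_i^\top A_i(\beta,\delta) + \sum_{j=1}^N \pi_{ij} v_j^\top + \mathbbm{1}^\top G_i^{\topout}(\delta) < 0, \quad v_i^\top G_i^{\topin} < \gamma \mathbbm{1}^\top.
\end{equation*}
If a direct citation proves insufficient, I would reprove this by choosing the co-positive Lyapunov functional $V(x,i) = v_i^\top x$, computing its infinitesimal generator along $(x,\sigma)$, and applying Dynkin's formula together with positivity of the trajectories to bound $\|E[y]\|_{\mathscr{L}_1}$ by $\gamma\,\|f^{\topin}\|_{\mathscr{L}_1}$ under the exponential mean-stability hypothesis.

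Next, using the decomposition $A_i = A^o_i - A^d_i$ permitted by Assumption~\ref{costfunc} (the Metzler structure of $A_i$ forces $A^o_i$ to collect the off-diagonal posynomial entries and $A^d_i$ the negated diagonal posynomial entries, both posynomial), the first dual inequality rearranges entrywise into
\begin{equation*}
v_i^\top A^o_i(\delta) + \sum_{j \neq i} \pi_{ij} v_j^\top + \mathbbm{1}^\top G_i^{\topout}(\delta) < v_i^\top A^d_i(\beta,\delta) - \pi_{ii} v_i^\top,
\end{equation*}
which is a comparison between two posynomials in the positive variables $(\beta,\delta,v_i)$ because $\pi_{ij} > 0$ for $j \neq i$ and $-\pi_{ii} > 0$. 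Taking entrywise logarithms and performing the change of variables $\beta = \exp[\phi]$, $\delta = \exp[\eta]$ (with $v_i$ treated as a positive variable to which the log transformation of Lemma~\ref{log:convexity} also applies) converts each side into a convex function of the log-variables, so their difference is a DC function and reproduces precisely the first constraint of the stated DC program. The input-gain inequality, the cost bound $L(\beta,\delta) \leq \bar L$, and the box constraints~\eqref{eq:interval} become DC constraints by the same mechanism, since Assumption~\ref{costfunc}(2) makes $L$ posynomial and the box constraints reduce to affine (hence trivially DC) inequalities in $\phi, \eta$. Recovering $(\beta,\delta)$ from the optimal $(\phi^\star,\eta^\star)$ via~\eqref{solution} then delivers a solution to Problem~\ref{eq:prob_H2}.

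The principal obstacle I anticipate is the dual $\mathscr{L}_1$-gain characterization in the first step: while its deterministic positive-systems version is classical, its Markov-jump extension demands a careful generator-based argument that simultaneously handles the continuous drift and the discrete jumps and exploits exponential mean-stability to justify integration to infinity. Once this characterization is in place, the DC reformulation becomes a mechanical application of Lemma~\ref{log:convexity} to a finite list of posynomial inequalities.
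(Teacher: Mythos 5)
Your proposal is correct and follows essentially the same route as the paper: an $\mathscr{L}_1$-gain characterization via positive dual vectors $v_i$ yielding exactly the inequality system \eqref{eq:proof_H2}, the same split of the Metzler matrix into off-diagonal posynomial part $A^o_i(\delta)=D\odot A_{\mathcal{G}}$ and diagonal part $A^d_i(\beta,\delta)$ together with the separation of $\pi_{ii}$ from $\pi_{ij}$, $j\neq i$, and the same log-transformation via Lemma~\ref{log:convexity} to obtain DC constraints. The only cosmetic differences are that the paper invokes Theorem~4 of~\cite{Ogura2017} rather than~\cite{Li2016} (or a Dynkin-formula argument) for the gain characterization, and your explicit remark that the log change of variables should also be applied to $v_i$ is if anything more careful than the paper's statement.
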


% For the proof of Theorem 1, we start by adopting the result of positive linear system proposed in~\cite{Ogura2019c}.

% \textbf{Lemma 2 (\cite{Ogura2019c})}: Consider the positive linear system~$\Sigma$:
% \begin{equation}\label{eq:PLS}
% \Sigma: \left \{
% \begin{aligned}
% &\frac{dx}{dt}=Fx+Gw,\\
% &y=Hx,
% \end{aligned}
% \right.
% \end{equation}
% where $F$ is Metzler, and~$G$ and~$H$ are nonnegative matrices. Let $\gamma>0$ be a constant. The row and column vectors $\widetilde{H}=\sum_{i=1}^{n_y}H_i\otimes H_i$ and~$\widetilde{G}=\sum_{j=1}^{n_w}G_j\otimes G_j$, where $H_i$ and~$G_j$ denote the $i$th row and~$j$th column of the matrices $H$ and~$G$, respectively. Then, the following conditions are equivalent:
% \begin{itemize}
% 	\item[1)]
% 	$\Sigma$ is internally stable and~$\|\Sigma\|_2<\gamma$.
% 	\item[2)] There exist a positive vector~$w$ such that 
% 	\begin{equation}
% 	\begin{aligned}
% 	&\widetilde{H}w<\gamma^2,\\
% 	&(F\oplus F)w+\widetilde{G}<0.
% 	\end{aligned}
% 	\end{equation}
% \end{itemize}
 %L(\beta, \delta) 
 \begin{proof}
 Based on Section~\ref{network model}, system~\eqref{eq:sys} is proved to be a standard positive Markov jump linear systems. Resorting to the stability result in Theorem 4~\cite{Ogura2017}, we can show that if the buffer network~$\Sigma_\sigma$ is exponential mean stable, then the following problem is equivalent to Problem~\ref{eq:prob_H2}: 
 \begin{small}
 \begin{subequations}\label{eq:proof_H2}
 \begin{align}
 \minimize_{\mathclap{\beta, \delta \in\Theta, v \in \mathbb{R}_{+}^{n} }}\ \ &\gamma \label{eq:proof_H2:A}\\
 \subjectto\ \ & v_i^\top A_i(\beta, \delta) +\Sigma_{i=1}^N\pi_{ij}v_j^\top+\mathbbm{1}_r^\top G_i^{\topout}(\delta) \leq 0, \label{eq:proof_H2:B}
 \\
 & 	v_i^\top G_i^{\topin}  -\gamma\mathbbm{1}_s^\top \leq 0, \label{eq:proof_H2:C}\\
 & L(\beta, \delta) \leq \bar{L}, \label{eq:proof_H2:D}\\
 & \eqref{eq:interval}.\label{eq:proof_H2:E}
 \end{align}
 \end{subequations}
 \end{small}
 From the observation in~\eqref{eq:B}, $A_{i}(\beta, \delta)$ includes of posynomials within negative sign, therefore, \eqref{eq:proof_H2} is no longer a linear programming and become nonconvex.
 For this situation, we resort to the log--log convexity of the posynomials in Lemma~\ref{log:convexity} for reducing~\eqref{eq:proof_H2} into DC programming problem. According to Definition~\ref{def:posy} and Assumption~\ref{costfunc}, the sum of~$g_i$ and~$h_{ij}$ in~\eqref{eq:proof_H2:D} is the sum of monomials in essence. Thus, \eqref{eq:proof_H2:D} is also a posynomial function. The object function~\eqref{eq:proof_H2:A} subtracts $0$ that satisfies the DC functions in Definition~\ref{DC programming}. For the constraint~\eqref{eq:proof_H2:C}, the product of constant matrices $G_i^{\topin}$ and positive vector variables are nonnegative matrices with posynomial entries. Since $\gamma\mathbbm{1}_s^\top$ is obviously nonnegative, \eqref{eq:proof_H2:C} is the difference between two nonnegative matrices, which can be successfully transformed to DC functions by the log--log transformation. According to Assumption~\ref{costfunc}, $A_{i}(\beta, \delta)$ is decomposed into two nonnegative matrices, where $A^{o}_i(\delta)= D \odot  A_{\i G}$ and  $A^{d}_i(\beta, \delta)=\mathbbm{1}^\top (D \odot A_{\i G})+B$. From the model description in Section~\ref{subsec:model}, each entries among the decomposed matrices are either posynomials or zero.
Likewise, the same way for the decomposition of probability rate matrix.
 Thus, \eqref{eq:proof_H2:B} is equivalent to 
 $(v_i^\top A^o_{i}(\delta)+\Sigma_{i\neq j}^N\pi_{ij}v_j^\top+\mathbbm{1}_r^\top G_i^{\topout}(\delta))-(v_i^\top A^{d}_i(\beta, \delta) -\pi_{ii}v_i^\top )< 0$,
 which shows the difference between the two materials. Similarly, \eqref{eq:proof_H2:B} can also be transformed to DC functions. For~\eqref{eq:proof_H2:E}, we can directly obtain the variable constraints in the form of DC functions from~\eqref{eq:interval}. Hence, Theorem~\ref{theorem_H2} is a DC programming problem. This completes the proof of theorem. 
 \end{proof}

% 
 
%\begin{remark}
 %\add{}
%\end{remark}
  %\masaki{Added this remark to indicate the difference of the current work from the conventional method in~\cite{Ogura2019c}.}
Theorem~\ref{theorem_H2} shows that Problem~\ref{eq:prob_H2} can be
turned into an equivalent DC program. In solving Problem~\ref{eq:prob_H_infty}, we adopt $\mathscr{L}_{\infty}$ norm as the performance constraint of system $\Sigma_\sigma$ to show the framework of minimizing the parameter tuning cost. 

\begin{theorem}\label{thm:H_infty}
Under the aforementioned assumptions and lemma, if we set a $\mathscr{L}_{\infty}$-norm constraint for system $\Sigma_\sigma$ by $\bar{\gamma}>0$, the solution of Problem~\ref{eq:prob_H_infty} is given by the solution of the following DC programming problem,
\begin{small}
\begin{align*}%\label{eq:theorem_H_infty}
\minimize_{\mathop{\phi, \eta \in \mathbb{R}
\atop
v_i \in \mathbb {R}^{n}_+   
}}
\ \ & \log[ L(\exp[\phi], \exp[\eta])]
\\
\subjectto\ \ \ &\log[ A^o_i(\exp[\eta])v_i +\Sigma_{i\neq j}^N\pi_{ij}v_j^\top+G_i^{\topin}\mathbbm{1}_r ]  \notag  \\  
& - \log[ A^{d}_i(\exp[\eta],\exp[\phi])v_i -\pi_{ii}v_i^\top] \leq 0,   \\
&	\log[G_i^{\topout}v_i]  -\log[\bar{\gamma}\mathbbm{1}_s^\top] \leq 0,  \\
%&\log[L(\exp{[\eta]}, \exp[\phi])] - \log[\bar L] \leq 0  \label{eq:opt_framework_1:D},\\
&\log[ \exp[\phi]]- \log[\exp[\bar{\phi}]]\leq 0,\\ 
&\log[\exp[\eta]]- \log [\exp[\bar{\eta}]]\leq 0.
\end{align*}
\end{small}
The solution of Problem~\ref{eq:prob_H_infty} is then given by~\eqref{solution}.
\end{theorem}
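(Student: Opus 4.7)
The plan is to mirror the proof of Theorem~\ref{theorem_H2}, interchanging the roles of rows and columns that arise from the duality between the $\mathscr{L}_1$ and $\mathscr{L}_\infty$ gains of a positive Markov jump linear system. First, I would invoke the $\mathscr{L}_\infty$-gain characterization for positive MJLS (the $\mathscr{L}_\infty$ counterpart of Theorem~4 in~\cite{Ogura2017}) to recast Problem~\ref{eq:prob_H_infty}: the system $\Sigma_\sigma$ satisfies $\|\Sigma_\sigma\|_\infty \le \bar\gamma$ if and only if, for each mode $i \in S$, there exists a positive vector $v_i \in \mathbb{R}_{+}^{n}$ such that
\begin{equation*}
A_i(\beta,\delta)v_i + \sum_{j\neq i}^{N}\pi_{ij}v_j + G_i^{\topin}\mathbbm{1}_r \le 0,\qquad G_i^{\topout}v_i \le \bar\gamma\mathbbm{1}_s,
\end{equation*}
together with the parameter box constraints~\eqref{eq:interval}. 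This turns Problem~\ref{eq:prob_H_infty} into an equivalent nonconvex optimization problem of minimizing $L(\beta,\delta)$ over the variables $\beta$, $\delta$, and $v_1,\dotsc,v_N$.

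Next, I would invoke Assumption~\ref{costfunc} to split the Metzler matrix as $A_i(\beta,\delta) = A_i^{o}(\delta) + A_i^{d}(\beta,\delta)$, where $A_i^{o}$ gathers the off-diagonal (posynomial) entries that appear with a plus sign in the state equation and $A_i^{d}$ absorbs the diagonal loss terms that enter with a minus sign. The first inequality above is then rewritten as
\begin{equation*}
\bigl(A_i^{o}(\delta) v_i + \sum_{j\neq i}^{N}\pi_{ij}v_j + G_i^{\topin}\mathbbm{1}_r\bigr) - \bigl(A_i^{d}(\delta,\beta)v_i - \pi_{ii}v_i\bigr) \le 0,
\end{equation*}
exhibiting it as a componentwise difference of two vectors whose entries are posynomials in $(\beta,\delta,v_i)$. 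The same decomposition applied to the output constraint $G_i^{\topout}v_i \le \bar\gamma\mathbbm{1}_s$, to the cost $L(\beta,\delta)$ (a posynomial by Assumption~\ref{costfunc}(ii)), and to the box constraints~\eqref{eq:interval} yields only posynomial or affine quantities.

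I would then perform the change of variables $\beta = \exp[\phi]$, $\delta = \exp[\eta]$ and take the entrywise logarithm of every inequality. By Lemma~\ref{log:convexity}, the log--exp composition of a posynomial is convex, so each side of every inequality becomes convex in the new variables, and the inequalities themselves become DC constraints. The objective $\log[L(\exp[\phi],\exp[\eta])]$ is then convex—trivially DC—while the bounds $\log[\exp[\phi]] \le \log[\exp[\bar\phi]]$ and $\log[\exp[\eta]] \le \log[\exp[\bar\eta]]$ are affine. This is precisely the DC program displayed in the theorem, and recovering $\beta$, $\delta$ via~\eqref{solution} produces the minimizer for Problem~\ref{eq:prob_H_infty}.

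The main obstacle I anticipate is pinning down the $\mathscr{L}_\infty$ dual certificate correctly in the stochastic-switching setting: unlike the left-vector inequality used for the $\mathscr{L}_1$ proof, the $\mathscr{L}_\infty$ version requires right-multiplication by $v_i$ and the transposed roles of $G_i^{\topin}$ and $G_i^{\topout}$ as they appear in Theorem~\ref{thm:H_infty}. Once this characterization is established, the remaining log--log convexity argument is entirely mechanical and parallels the proof of Theorem~\ref{theorem_H2} verbatim.
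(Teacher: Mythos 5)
Your proposal is correct and follows essentially the same route as the paper: the paper likewise reduces Problem~\ref{eq:prob_H_infty} to a nonconvex program with right-vector certificates $A_i(\beta,\delta)v_i+\sum_{j}\pi_{ij}v_j+G_i^{\topin}\mathbbm{1}\leq 0$, $G_i^{\topout}(\delta)v_i\leq\bar\gamma\mathbbm{1}$, and then declares the posynomial decomposition and log--log (DC) transformation identical to the proof of Theorem~\ref{theorem_H2}, exactly as you do. The only cosmetic difference is the source of the $\mathscr{L}_\infty$-gain characterization --- the paper cites Theorem~2 of~\cite{Li2016} rather than an $\mathscr{L}_\infty$ analogue of~\cite{Ogura2017} --- which does not change the argument.
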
 
%\masaki{don't use "the results can be derived that". Google the phrase and confirms that almost none uses the phrase. Do not invent a phrase yourself.}
\begin{proof}
Relying on the $\mathscr{L}_{\infty}$ stability results in Theorem 2~\cite{Li2016} which shows that if the positive linear system~$\Sigma_\sigma$ is internally stable and~$\|\Sigma_\sigma\|_\infty < \bar{\gamma}$, we can show that the following optimization problem is equivalent to Problem~\ref{eq:prob_H_infty}: 

\begin{subequations}\label{eq:proof_H_infty}
\begin{small}
 \begin{align}
 \minimize_{\mathop{\beta, \delta \in\Theta, v \in \mathbb{R}_{+}^{n}}}\ \ & L(\beta, \delta) \\
 \subjectto \ \ &  A_i(\beta, \delta)v_i +\Sigma_{i=1}^N\pi_{ij}v_j^\top+G_i^{\topin}\mathbbm{1}_s  \leq 0,
 \\
 & 	G_i^{\topout}(\delta) v_i  -\bar{\gamma}\mathbbm{1}_r \leq 0,\\
 & \eqref{eq:interval}.
 \end{align}
 \end{small}
 \end{subequations}
 
 The direction of the proof of the equivalence between Problem~\ref{eq:prob_H_infty} and~\eqref{eq:proof_H_infty} is the same as the procedure of Theorem~\ref{theorem_H2} and, therefore, is omitted.
 %Based on the results in Theorem~\ref{theorem_H2}, we start the proof from~\eqref{eq:proof_H_infty:B}. Because $G^{\topout}(\delta)$, $\xi$, and~$\nu$ are non-negative matrices, the entries in the product vectors  $G^{\topout}(\delta)\xi$ and~$\gamma_\infty \nu$ are either posynomials or zero. In a manner similar to the proof of~\eqref{eq:proof_H2:C}, constraint~\eqref{eq:proof_H_infty:C} is equivalent to $\widetilde{A}_{\i G}(\delta)\xi+G^{\topin}\mu- R(\beta, \delta)\xi < 0$,
 %which shows the difference between the two non-negative matrices. Similarly, the existence of DC functions from \eqref{eq:proof_H_infty:D} and~\eqref{eq:proof_H_infty:E} can also be obtained using the ways of \eqref{eq:proof_H_infty:B} and~\eqref{eq:proof_H_infty:C}. By taking Lemma~\ref{log:convexity}, problem \eqref{eq:proof_H_infty} is reduced to the DC programming problem.}
 \end{proof}
 %Then, we proof the reduced problem \eqref{eq:proof_H_infty} can be transformed to DC programming problem. As in the proof of Theorem~\ref{theorem_H2}, Lemma~\ref{log:convexity} shows that \eqref{eq:proof_H_infty:A} is a polynomial function. Because $\xi$ and~$\nu$ are positive vectors, $G^{\topout}(\delta)$ is nonnegative, and~$\gamma_\infty$ is a positive number, we have both $G^{\topout}(\delta)\xi$ and~$\gamma_\infty \nu$ are posynomials. Thus, the constraint \eqref{eq:proof_H_infty:B} is a DC function. In a similar manner, \eqref{eq:proof_H_infty:D} can be proved. 

\section{Numerical simulation}\label{simulation}

\section{Conclusion}
%In this brief, we studied the $H^2$ and~$H^\infty$ norm-constrained optimization problems for dynamic buffer networks. 
%In the problem formulation, we relaxed the adjustable region of the buffer network to all its edges, and formulated the extended optimization problems. Adhering to the mild assumption of the cost function, the  extended optimization problems can be reduced to standard DC programming problems through the log--log convexity property of the posynomials. Numerical simulations are presented to illustrate the numerical scalability of the proposed optimization framework as well as its effectiveness. Several research directions should be pursued. One such direction is considering the $L^1$-induced norm. Another direction is the application of our finds to the socio-technical systems discussed in the introduction.

\section*{Acknowledgement}
This work was partially supported by the joint project of Kyoto University and Toyota Motor Corporation, titled ``Advanced Mathematical Science for Mobility Society.''

%\bibliographystyle{IEEEtranS}
%\bibliography{mobility}

\begin{thebibliography}{99}
 \bibitem{Michez2017}
A. Michez et al.,
“Multi-temporal monitoring of a regional riparian buffer network (12,000 km) with LiDAR and photogrammetric point clouds,” {\it Journal of Environmental Management},
vol. 202, pp. 424--436, 2017.

\bibitem{Vafamand2019}
N. Vafamand et al., “Networked fuzzy predictive control of power buffers for dynamic stabilization of DC microgrids,” {\it IEEE Transactions on Industrial Electronics}, vol. 66, no. 2, pp. 1356--1362, 2019.

\bibitem{Pu2018}
C. Pu et al., “Bufferless transmission in complex networks,” {\it IEEE Trans. Circuits Syst. II: Express Briefs}, vol. 65, no. 7, pp. 893--897, 2018.

\bibitem{Como2013}	
 G. Como et al., “Stability analysis of transportation networks with multiscale driver decisions,” {\it SIAM Journal on Control and Optimization}, vol. 51, pp. 230--252, 2013.

\bibitem{Wu2017}
J. Wu and Y. Xia, “Complex-network-inspired design of traffic generation patterns in communication networks,” {\it IEEE Transactions on Circuits and Systems II: Express Briefs}, vol. 64, no. 5, pp. 590--594, 2017.

\bibitem{Grandinetti2019}
P. Grandinetti, C. Canudas-de-Wit, and F. Garin, “Distributed optimal traffic lights design for large-scale urban networks,” {\it IEEE Transactions on Control Systems Technology}, vol. 27, no. 3, pp. 950--963, 2019.

\bibitem{Calafiore2019}
G. C. Calafiore, C. Bongiorno, and A. Rizzo, “A robust MPC
approach for the rebalancing of mobility on demand systems,”  {\it Control Engineering Practice}, vol. 90, pp. 169--181, 2019. 



\bibitem{Illgen2019}
S. Illgen and M. Hock, “Literature review of the vehicle relocation problem in one-way car sharing networks,”  {\it Transportation Research Part B: Methodological.}, vol. 120, pp. 193--204, 2019. 


	
 
 	 \bibitem{Farina2000}
 L. Farina and S. Rinaldi, “{\it Positive Linear Systems: Theory and Applications,}” John Wiley, 2000.

\bibitem{Bolzern2014}
P. Bolzern, P. Colaneri, and G. D. Nicolao, Stochastic stability of Positive Markov Jump Linear Systems, Automatica, Volume 50, Issue 4, Pages 1181-1187, 2014.

\bibitem{Cavalcanti2020}
Joao Cavalcanti, Hamsa Balakrishnan,
Sign-stability of Positive Markov Jump Linear Systems,
Automatica,
Volume 111,
2020,


\bibitem{Zhu2017}
S. Zhu, Q. Han and C. Zhang, "$L_1$-Stochastic Stability and $L_1$-Gain Performance of Positive Markov Jump Linear Systems With Time-Delays: Necessary and Sufficient Conditions," in IEEE Transactions on Automatic Control, vol. 62, no. 7, pp. 3634-3639, July 2017, doi: 10.1109/TAC.2017.2671035.


\bibitem{Lian2021}
Jie Lian, Renke Wang,
Stochastic stability of positive Markov jump linear systems with fixed dwell time,
Nonlinear Analysis: Hybrid Systems,
Volume 40,
2021,
101014,
\bibitem{Song2020}
Song X, Lam J, Chen X, et al. Descriptor state‐bounding observer design for positive Markov jump linear systems with sensor faults: Simultaneous state and faults estimation[J]. International Journal of Robust and Nonlinear Control, 2020, 30(5): 2113-2129.

\bibitem{Li2016}
Shuo Li, Zhengrong Xiang,
Stochastic stability analysis and $L_{\infty}$-gain controller design for positive Markov jump systems with time-varying delays,
Nonlinear Analysis: Hybrid Systems,
Volume 22,
2016,
Pages 31-42,
\bibitem{Ogura2014}
	M. Ogura and C. F. Martin, “Stability analysis of positive semi-Markovian jump linear systems with state resets,” SIAM Journal on Control and Optimization, vol. 52, pp. 1809-1831, 2014. 
	

 

\bibitem{Ogura2017} M. Ogura and V. M. Preciado, Optimal design of switched networks of positive linear systems via geometric programming, {\it IEEE Transactions on Control of Network Systems}, vol. 4, no. 2, pp. 213-222, 2017.
 
 \bibitem{L.T.H.2005}
An, L.T.H., Tao, P.D. The DC (Difference of Convex Functions) Programming and DCA Revisited with DC Models of Real World Nonconvex Optimization Problems. Ann Oper Res 133, 23–46 (2005).

 \bibitem{LeThi2018}
 H.A. Le Thi, T. Pham Dinh, “DC programming and
DCA: thirty years of developments,” {\it Mathematical Programming}, vol. 169, pp. 5--68, 2018.

 \bibitem{Rantzer2018}
 A. Rantzer and M. E. Valcher, “A tutorial on positive systems and large scale control,”
 in {\it 57th IEEE Conf. Dec. and Con.}, pp. 3686--3697, 2018.
 
 %\bibitem{Gu2018}
 %J. Gu, R.C. de Lamare, and M. Huemer, “Buffer-aided physical-layer network coding with optimal linear code designs for cooperative networks,” {\it IEEE Transactions on Communications}, vol. 66, pp. 2560--2575, 2018.
 
 





%\bibitem{Zhao2021}
%C. Zhao, M. Ogura, M. Kishida, and A. Yassine, “Optimal resource allocation for dynamic product development process via convex optimization”, {\it Research in Eng. Design}, vol. 32, no. 1, pp. 71--90, 2021. 


%\bibitem{Jafari2010}
%F. Jafari, Z. Lu, A. Jantsch, and M. H. Yaghmaee, “Buffer optimization in network-on-chip through flow regulation”, {\it IEEE Tran. on Com. Des. of Int. Cir. and Sys.,} vol. 29, no. 12, pp. 1973--1986, 2010.

%\bibitem{Coogan2014}
%S. Coogan and M. Arcak, “Dynamical properties of a compartmental model for traffic networks,” {\it American Con. Conf.}, pp. 2511-2516, 2014.

%\bibitem{Como2015}
%G. Como, E. Lovisari, and K. Savla, “Throughput optimality and overload behavior of dynamical flow networks under monotone distributed routing,” {\it IEEE Tran. on Con. of Net. Sys.}, vol. 2, no. 1, pp. 57-67, 2015.

\bibitem{Ikeda2021}
T. Ikeda, K. Sakurama, and K. Kashima, “Multiple sparsity constrained control node scheduling with application to rebalancing of mobility networks,” {\it IEEE Transactions on Automatic Control}, doi: 10.1109/TAC.2021.3115441.

\bibitem{Bai2014}
Ruibin Bai, Stein W. Wallace, Jingpeng Li, Alain Yee-Loong Chong,
Stochastic service network design with rerouting,
Transportation Research Part B: Methodological,
Volume 60,
2014,

%\bibitem{Lovisari2014}
%E. Lovisari, G. Como, and K. Savla, “Stability of monotone dynamical flow networks,” in {\it 53rd IEEE Conf. on Dec. Con.}, pp. 2384--2389, 2014.

%\bibitem{Coogan2015}	
%S. Coogan and M. Arcak, “A compartmental model for traffic networks and its dynamical behavior,” {\it IEEE Transactions on Automatic Control}, vol. 60, no. 10, pp. 2698--2703, 2015.

% \bibitem{Ogura2019c}	
 %M. Ogura, M. Kishida, and J. Lam, “Geometric programming for optimal positive linear systems,” {\it IEEE Transactions on Automatic Control}, vol. 65, no. 11, pp. 4648--4663, 2020.

%\bibitem{Heyden2020}
%M. Heyden, R. Pates, and A. Rantzer, “Optimal transportation on directed tree graphs,” under review, 2020.

%\bibitem{Sadeghi2018}
%H. Sadeghi,  R. Pates, and A. Rantzer, “Distributed control, decentralized control, anti-windup, structure preserving,” in {\it 23rd Int. Symp. Math. Theory of Net. and Syst.}, pp.381--386, 2018.



%\bibitem{Rantzer2020}
%A. Rantzer and M. Valcher, “Scalable control of positive systems,” {\it Annual Rev. of Con., Robot., and Auto. Sys.}, Vol. 4, pp. 319--341, 2020.

%\bibitem{Rotkowitz2010}
%M. Rotkowitz, R. Cogill, and S. Lall, “Convexity of optimal control over networks with delays and arbitrary topology,” {\it International Journal of Systems, Control and Communications}, vol. 2, pp. 30--54, 2010.

%\bibitem{Savla2014}
%K. Savla et al., “Robust network routing under cascading failures,” {\it IEEE Trans. on Net. Science and Engineering}, vol. 1, no. 1, pp. 53--66, 2014.

\bibitem{Horst1999}	
 R. Horst and N. V. Thoai, “DC programming: overview,” {\it Journal of Optimization Theory and Applications}, vol. 103, pp. 1--43, 1999.
 
 \bibitem{Boyd2007}
 S. Boyd, S.J. Kim, L. Vandenberghe, and A. Hassibi, “A tutorial on geometric programming,”
 {\it Opt. and Eng.}, vol. 8, pp. 67--127, 2007.
 



%\bibitem{Rami2007}
%M. A. Rami and F. Tadeo, “Controller synthesis for positive linear systems
with bounded controls,” {\it IEEE Transactions on Circuits and Systems II: Express Briefs}, vol. 54,
no. 2, pp. 151--155, 2007.

%\bibitem{Gao2005}
%H. Gao, J. Lam, C. Wang, and S. Xu, “Control for stability and positivity:
%equivalent conditions and computation,”  {\it IEEE Transactions on Circuits and Systems II: Express Briefs}, vol. 52, no. 9, pp. 540--544, 2005.

%\bibitem{Rantzer2015}
%A. Rantzer, “An extended Kalman-Yakubovich-Popov lemma for positive
%systems,” in {\it 1st IFAC Conference on Modelling, Identification and Control of Nonlinear Systems}, pp. 242--245, 2015.





%\bibitem{Blanchini2012}
%F. Blanchini, P. Colaneri, and M. E. Valcher, “Co-positive Lyapunov functions for the stabilization of positive switched systems,” {\it IEEE Trans. Autom. Control}, vol. 57, no. 12, pp. 3038--3050, 2012. 



\bibitem{Oliveira2018}	
 W. Oliveira, “Proximal bundle methods for nonsmooth DC programming,” {\it Journal of Global Optimization}, vol. 75, pp. 523--563, 2019.

%\bibitem{Ebihara2018}
%Y. Ebihara, “$H_2$ analysis of LTI systems via conversion to externally positive systems,” {\it IEEE Trans. Auto. Cont.}, vol. 63, pp. 2566--2572, 2018.

%\bibitem{Tanaka2013}
%T. Tanaka, C. Langbort, and V. Ugrinovskii, “DC-dominant property of cone-preserving transfer functions,” {\it Syst. Control Lett.}, vol. 62, pp. 699--707, 2013.
\end{thebibliography}

%\masaki{Chengyan, please, merge the references above into the following set of references.}\chengyan{In [6] and [7].}

\bibliographystyle{unsrt}   

\end{document}